\spnewtheorem{conjecturee}[conjecture]{Conjecture}{\bfseries}{\itshape}
\begin{document}
\markboth{\LaTeXe{} Class for Lecture Notes in Computer
Science}{\LaTeXe{} Class for Lecture Notes in Computer Science}
\thispagestyle{empty}
\pagestyle{plain}

\title{
The Complexity of Determining  Existence a Hamiltonian Cycle is
 $O(n^3)$ }
\author{Guohun Zhu}
\institute{  Guilin University of Electronic Technology,\newline
            No.1 Jinji Road,Guilin, Guangxi, 541004,P.R.China \newline
            \email{ccghzhu@guet.edu.cn}}

\maketitle

\begin{abstract}
The Hamiltonian cycle problem  in digraph  is mapped into a matching
cover bipartite graph. Based on this mapping, it is proved that
determining existence a Hamiltonian cycle in graph is $O(n^3)$.
\\
\end{abstract}
\begin{abstract}
Hamiltonian Cycle, Z-mapping graph, complexity, decision, matching
covered, optimization
\end{abstract}

\section{Introduction}
It is well known that Hamiltonian cycle problem include two
question.
\begin{problem}
\label{problem1}
 Determining whether a graph has a Hamiltonian
cycle?
\end{problem}

\begin{problem}
\label{problem2}
 Obtaining  a Hamiltonian cycle from a graph if it
exists?
\end{problem}

Both two problem are standard NP-complete problem in general graph
\cite{Johnson1985}. But many research try to solve its in polynomial
constraints the  graph in special cases. Such as semicomplete
multipartite digraphs \cite{Bang-Jensen1998}, solid grid graphs
\cite{William1997}, etc..

 Recently Zhu \cite{zhu2007} had
proved that the complexity of finding Hamiltonian cycle in digraph
with degree bound two existence or not is $O(n^4)$, this complexity
of digraph had been proved $NP-complete$ by y J.Plesn{\'\i}k
\cite{PLESNIK1978}, thus HCP for general digraph is also polynomial.
But that reduce a new problem:
\begin{problem}
Could determining a Hamiltonian graph in polynomial without
obtaining a solution?
\end{problem}


 This paper answer this question by project the approach from \cite{zhu2007} and derived
from matching cover graph $C(G)$ from $D$, it proved that
determining a Hamiltonian cycle existence or not in a digraph $D$ is
equal to finding a edges set $E \subset D$ from  $C(G)$ and $E$ is
strong connected. Since Thus the complexity is deduce only need
$O(n^3)$.

\section{Definition and properties}

 Throughout this paper we consider the finite simple (un)directed
graph $D=(V,A)$ ($G(V,E)$, respectively), i.e. the graph has no
multi-arcs and no self loops. Let $n$ and $m$ denote the number of
vertices $V$ and arcs $A$ (edges $E$, respectively), respectively.

 As conventional, let $|S|$ denote the number of a set $S$.
 The set of vertices $V$ and set of arcs of $A$ of a digraph $D(V,A)$ are denoted by
 $V=\{v_i | 1 \leq i \leq n\}$ and $A=\{a_j | (1 \leq j \leq m) \wedge a_j=<v_i,v_k>, (v_i \neq v_k \in V) \}$
 respectively,
  where $ <v_i,v_k>$ is a arc from $v_i$ to $v_k$ and
a reverse arc is denoted by $\overleftarrow{a_k}=<v_k,v_i> \mbox{ if
 it exists } $.
 Let the out degree of vertex $v_i$ denoted by $d^{+}(v_i)$,
which has the in degree by denoted as $d^{-}(v_i)$ and has the
degree $d(v_i)$ which equals $d^{+}(v_i)+d^{-}(v_i)$. Let the
$N^+(v_i)=\{v_j| <v_i,v_j> \in A \}$, and $N^-(v_i)=\{v_j |
<v_j,v_i> \in A \}$.

Let us define a forward relation $ \bowtie $ between two arcs as
following, $ a_i \bowtie a_j = v_k \: \mbox{iff}\: a_i=<v_i,v_k>
\wedge a_j=<v_k, v_j> $

It is obvious that $|a_i \bowtie a_i|=0$ . A pair of symmetric arcs
$<a_i,a_j>$ are two arcs of a simple digraph if and only if $|a_i
\bowtie a_j|=1 \wedge |a_j \bowtie a_i|=1$.

A {\it cycle } $L$  is a set of arcs $(a_{1},a_{2},\ldots,a_{q})$ in
a digraph $D$, which obeys two conditions:
\begin{enumerate}
\item[c1.] $ \forall a_i \in L,\exists a_j,a_k \in L \setminus \{a_i\},\;  a_i \bowtie a_j \neq a_j \bowtie a_k \in V $
\item[c2.] $ | \bigcup\limits_{a_i \neq a_j \in L } {a_i \bowtie a_j} |=|L|$
\end{enumerate}
   If a cycle $L$ obeys the following conditions, it is a {\it simple cycle}.
\begin{enumerate}
\item[c3.] $\forall L' \subset L $, $L'$ does not satisfy both conditions $c1$  and  $c2$.
\end{enumerate}

 A {\it Hamiltonian cycle $L$} is also a simple cycle of length $n=|V| \geq
 2$ in digraph. A graph that has at least one Hamiltonian cycle is
called a {\it Hamiltonian graph}.

A graph G=$(V;E)$ is bipartite if the vertex set $V$ can be
partitioned into two sets $X$ and $Y$ (the bipartition) such that
$\exists e_i \in E, x_j \in X, \forall x_k \in X \setminus \{x_j\}$,
$(e_i \bowtie x_j \neq \emptyset \rightarrow e_i \bowtie x_k
=\emptyset)$ ($e_i, Y$, respectively). if $|X|=|Y|$, We call that
$G$ is a balanced bipartite graph. A matching $M \subseteq E$ is a
collection of edges such that every vertex of $V$ is incident to at
most one edge of $M$, a matching of balanced bipartite graph is
perfect if $|M| = |X|$. Hopcroft and Karp shows that constructs a
perfect matching of bipartite in $O((m + n)\sqrt(n))$
\cite{Hopcroft1973}. The matching of bipartite has a relation with
neighborhood of $X$.
\begin{theorem}
\cite{Hall1935} \label{HallTheorem} A bipartite graph $G=(X,Y;E)$
has a matching from $X$ into $Y$ if and only if $|N(S)| \geq S$, for
any $S \subseteq X$.
\end{theorem}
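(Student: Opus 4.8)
The statement is the classical marriage theorem of Hall, where I read the condition ``$|N(S)| \geq S$'' as $|N(S)| \geq |S|$ for every $S \subseteq X$. The plan is to prove the two implications separately. The forward (``only if'') direction is routine: if $M$ is a matching saturating $X$, then $M$ restricts to an injection from any $S \subseteq X$ into $Y$ whose image lies inside $N(S)$, so $|N(S)| \geq |S|$ follows at once. All the work is in the converse, and I would establish it by induction on $|X|$.

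For the base case $|X| = 1$, the hypothesis $|N(X)| \geq 1$ guarantees that the single vertex of $X$ has a neighbor, which we match to it. For the inductive step, assuming the result for every bipartite graph with a strictly smaller left part, I would split into two cases according to how tightly the hypothesis is met. In the first case, suppose every nonempty proper subset $S \subsetneq X$ satisfies the strict inequality $|N(S)| \geq |S| + 1$. Then I pick an arbitrary $x \in X$, match it to any neighbor $y \in N(x)$, and delete both $x$ and $y$. In the residual graph $G' = (X \setminus \{x\}, Y \setminus \{y\})$ deleting $y$ lowers each neighborhood size by at most one, so for every $S \subseteq X \setminus \{x\}$ we still have $|N_{G'}(S)| \geq |N(S)| - 1 \geq |S|$. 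Hall's condition survives, induction yields a matching of $X \setminus \{x\}$, and adjoining the edge $xy$ saturates $X$.

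The second case is where I expect the main obstacle to lie: suppose some nonempty proper subset $S_0 \subsetneq X$ is \emph{tight}, meaning $|N(S_0)| = |S_0|$. I would apply the induction hypothesis twice. First, Hall's condition is inherited by the subgraph on $(S_0, N(S_0))$, so there is a matching $M_1$ saturating $S_0$; because $|N(S_0)| = |S_0|$, this matching exhausts $N(S_0)$. Second, I consider the subgraph $G''$ on $(X \setminus S_0, Y \setminus N(S_0))$ and must verify Hall's condition there. For any $T \subseteq X \setminus S_0$, the neighbors of $T$ inside $Y \setminus N(S_0)$ are exactly $N(T) \setminus N(S_0)$, and applying the hypothesis to $S_0 \cup T$ gives $|S_0| + |N_{G''}(T)| = |N(S_0 \cup T)| \geq |S_0| + |T|$, whence $|N_{G''}(T)| \geq |T|$.

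This inheritance step — correctly accounting for shared neighbors by passing to the union $S_0 \cup T$ rather than treating $T$ in isolation — is the delicate point, since a careless argument would double-count vertices of $N(S_0)$ and break the inequality. Once it is in hand, induction provides a matching $M_2$ saturating $X \setminus S_0$; and since $M_1$ and $M_2$ live on disjoint vertex sets of both sides, their union is a matching saturating all of $X$, completing the induction. An alternative route, thematically closer to the matching machinery of this paper, would take a maximum matching, and from an unsaturated vertex build the set $A \subseteq X$ reachable by alternating paths; maximality forces $N(A)$ to coincide with the matched partners of $A \setminus \{x_0\}$, yielding $|N(A)| = |A| - 1 < |A|$ and contradicting the hypothesis — but the inductive proof above is the one I would commit to.
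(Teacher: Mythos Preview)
Your proof is correct: the forward direction is immediate, and your inductive argument for the converse---splitting on whether a tight proper subset $S_0$ exists, and in the tight case verifying Hall's condition on $(X\setminus S_0,\,Y\setminus N(S_0))$ via the union trick $|N(S_0\cup T)|\ge |S_0\cup T|$---is the standard textbook proof of Hall's theorem, carried out cleanly.

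There is, however, nothing to compare against: the paper does not supply its own proof of this statement. It is quoted as a classical result with the citation \cite{Hall1935} and no \texttt{proof} environment follows; the paper immediately moves on to Lemma~\ref{simplecycle}. So your proposal fills in an argument the paper deliberately omits. Either of the two routes you sketch (the induction you commit to, or the alternating-path argument from a maximum matching) would be acceptable as a supplied proof; the alternating-path version is perhaps marginally more in keeping with the paper's later reliance on matching algorithms, but the inductive proof is self-contained and requires no prerequisites.
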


\begin{lemma}
\label{simplecycle} A even length of simple cycle consist of two
disjoin perfect matching.
\end{lemma}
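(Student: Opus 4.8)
The plan is to decompose the cycle by an alternating two-colouring of its edges. First I would use conditions c1 and c2 to extract a canonical cyclic ordering of the arcs: since each arc $a_i \in L$ has, by c1, a unique successor $a_j$ with $a_i \bowtie a_j \in V$, I can relabel the arcs as $a_1, a_2, \ldots, a_{2k}$ so that $a_t \bowtie a_{t+1} \in V$ for every $t$ (indices taken modulo $2k$). By condition c2 the $2k$ vertices $a_t \bowtie a_{t+1}$ are pairwise distinct, so this ordering visits each vertex incident to $L$ exactly once; I write the vertex lying between $a_t$ and $a_{t+1}$ as $v_t$, which yields a cyclic vertex sequence $v_1, v_2, \ldots, v_{2k}$.

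Second, I would split the edge set according to the parity of position, setting
\[
M_1 = \{a_1, a_3, \ldots, a_{2k-1}\}, \qquad M_2 = \{a_2, a_4, \ldots, a_{2k}\}.
\]
By construction $M_1 \cap M_2 = \emptyset$, $M_1 \cup M_2 = L$, and each of $M_1,M_2$ contains exactly $k$ edges.

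Third, I would verify that each $M_i$ is a perfect matching on the set $V(L)$ of vertices incident to $L$. Because the ordering visits every vertex exactly once and two consecutive arcs share precisely one endpoint, each vertex $v_t$ lies on exactly two arcs of $L$, one with even index and one with odd index; hence exactly one arc of $M_1$ and exactly one arc of $M_2$ is incident to $v_t$. Thus each $M_i$ saturates all $2k$ vertices by $k$ pairwise independent edges, which is precisely the definition of a perfect matching on $V(L)$.

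Finally, I would record where the even length is essential: the split into odd-indexed and even-indexed arcs remains consistent around the closed cycle only when the number of arcs is even, since for odd length the wrap-around arc $a_{2k+1}=a_1$ would share its parity class with its predecessor and some vertex would receive two arcs of the same class, destroying the matching property. The main obstacle is therefore not any deep argument but the bookkeeping needed to extract the well-defined cyclic order from the $\bowtie$-based definition of a simple cycle, together with confirming that ``perfect'' is intended relative to $V(L)$ rather than to the vertex set of the whole graph; once the ordering is fixed, the matching property follows immediately.
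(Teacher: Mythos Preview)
Your argument is correct and is the standard alternating two-colouring proof of this folklore fact. The paper itself supplies no proof of Lemma~\ref{simplecycle}: the statement is given without justification, presumably as well known, and the text moves directly on to incidence matrices. There is therefore nothing to compare your route against; your extraction of the cyclic order from conditions c1 and c2, the parity split $M_1=\{a_1,a_3,\ldots\}$, $M_2=\{a_2,a_4,\ldots\}$, and the observation that evenness is needed for the wrap-around arc to keep the parity classes consistent, together constitute exactly the expected proof. Your caveat that ``perfect'' should be read relative to $V(L)$ rather than to the ambient vertex set is also the right reading, and it matches how the paper later applies the lemma (to the Z-mapping of a simple directed cycle on $n$ vertices).
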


Two matrices representation related graphs are defined as follows.

\begin{definition}
\cite{Pearl1973} The incidence matrix $C_{nm}$  of a undirected
graph $G(V;E)$  is a $(0,1)$-matrix with element
\begin{equation}
c_{ij} = \left\{\begin{array}{ll}
                 1, & \mbox{if $v_i \in e_j$;} \\
                 0,      & \mbox{otherwise.}
                \end{array} \right.
\end{equation}
\end{definition}

It is obvious that every column of an incidence matrix has exactly
two $1$ entries.

\begin{definition}
\label{incidencematrixdef} \cite{Pearl1973}
 The incidence
matrix $C_{nm}$  of directed graph $D$ is $(-1,0,1)$-matrix with
element
\begin{equation}
\label{incidencedef} c_{ij} = \left\{\begin{array}{ll}
                 1, & \mbox{if $<v_i,v_i> \bowtie a_j =v_i $;} \\
                 -1, & \mbox{if $a_j \bowtie <v_i,v_i>=v_i $;} \\
                 0, & \mbox{$otherwise $}.
                \end{array} \right.
\end{equation}
\end{definition}

It is obvious that each column of an incidence matrix of digraph has
exactly one $1$ and one $-1$ entries.

\begin{theorem}
\label{rank_theorem} \cite{Pearl1973} The $C$ is the incidence
matrix of a directed graph with $k$ components the rank of $C$ is
given by
\begin{equation}
\label{rankcomponent}
    r(C)=n-k
\end{equation}
\end{theorem}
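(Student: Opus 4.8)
The plan is to reduce the claim to connected graphs and then treat a single connected graph by combining a cheap upper bound with a spanning-tree lower bound. First I would relabel the vertices so that all vertices of one component are consecutive; since no arc joins two distinct components, the incidence matrix $C$ becomes block diagonal, $C=\mathrm{diag}(C_1,\dots,C_k)$, where $C_i$ is the incidence matrix of the $i$-th component, a connected digraph on $n_i$ vertices with $\sum_{i=1}^{k} n_i = n$. Because the rank of a block-diagonal matrix is the sum of the ranks of its blocks, it suffices to prove that the incidence matrix of a connected digraph on $n_i$ vertices has rank exactly $n_i-1$; summing then yields $r(C)=\sum_{i=1}^{k}(n_i-1)=n-k$.

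For the upper bound on a single connected block I would invoke the structural fact recorded just after Definition \ref{incidencematrixdef}: every column of the incidence matrix of a digraph has exactly one $1$ and one $-1$. Hence every column sums to zero, so the all-ones row vector $\mathbf{1}^{T}$ satisfies $\mathbf{1}^{T} C_i = 0$. This is a nontrivial linear dependence among the $n_i$ rows, giving $r(C_i)\le n_i-1$.

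The lower bound is the substantive step. A connected digraph on $n_i$ vertices contains a spanning tree $T$ with $n_i-1$ edges. Let $B$ be the $n_i\times(n_i-1)$ submatrix of $C_i$ formed by the columns indexed by the edges of $T$, and delete any one row (say the row of a chosen root $r$) to obtain a square matrix $B'$ of order $n_i-1$. I would show $B'$ is nonsingular by a leaf-peeling argument: repeatedly remove from $T$ a leaf $v\ne r$ together with its unique incident tree edge $e$, and order the remaining vertices and edges in the reverse of this removal order. At each step the newly peeled leaf is incident in $T$ to exactly one not-yet-removed edge, contributing a single $\pm 1$ entry on the diagonal and only zeros on one side, so $B'$ is triangular with $\det B'=\pm 1\ne 0$. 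Thus the $n_i-1$ chosen columns are linearly independent and $r(C_i)\ge n_i-1$, which together with the upper bound gives $r(C_i)=n_i-1$.

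The main obstacle I anticipate is making the triangularization in the lower bound fully rigorous: one must verify that the leaf-peeling order genuinely places all off-diagonal nonzero entries on a single side, which rests on the fact that every finite tree has a leaf and that deleting a leaf with its incident edge again leaves a tree. Once that induction is set up cleanly, the two bounds coincide and the block-diagonal reduction of the first paragraph finishes the proof.
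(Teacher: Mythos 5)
Your proof is correct, but note that the paper itself offers no proof of this theorem at all: it is imported verbatim from the reference \cite{Pearl1973} and used as a black box, so there is no internal argument to compare against. Your write-up is the standard self-contained proof, and all three of its ingredients are sound: the block-diagonal reduction (valid because, after grouping vertices by component, no column of $C$ meets rows from two blocks), the upper bound $r(C_i)\le n_i-1$ from $\mathbf{1}^{T}C_i=0$ (which the paper's observation that each column has exactly one $1$ and one $-1$ licenses, since the graph is simple and loop-free), and the spanning-tree lower bound via leaf-peeling, where the resulting triangular matrix has diagonal entries $\pm 1$ regardless of edge orientations. Two small points deserve explicit mention to make it airtight in this paper's setting: first, ``components'' must be read as \emph{weakly} connected components, so the spanning tree $T$ lives in the underlying undirected graph and the orientation of each arc only affects signs, never the triangular structure; second, a component that is an isolated vertex ($n_i=1$, no arcs) gives an empty block of rank $0=n_i-1$, so the summation still goes through. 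With those remarks your argument fully justifies a statement the paper merely cites, which is a genuine improvement given that the paper later leans on this rank formula in Lemma~\ref{perfectofL} and the main theorems.
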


In order to convince to describe the graph $D$ properties, in this
paper, we denotes the $r(D)=r(C)$.

\section{Z-mapping graphs and Hamiltonian digraph}
Firstly, let us divided the matrix of $C$ into two groups.
\begin{equation}
\label{C_Plusedef}
      C^+=\left\{c_{ij} | c_{ij} \geq 0 \mbox{ otherwise  $0$ }\right \}
\end{equation}
\begin{equation}
\label{C_Minusdef}
      C^-=\left\{c_{ij} | c_{ij} \leq 0  \mbox{ otherwise  $0$ } \right \}
\end{equation}

\subsection{Concepts of Z-mapping graphs}
\begin{definition}
\label{incidencematrixdef} Let $C$ be a incidence matrix of digraph
$D$, the Z-mapping graph of $D$, denoted as $Z(D)$, is defined as a
balanced bipartite graph $G(X,Y:E)$ with a incidence matrix
$F=\left (
    {\begin{array}{c c}
    C^+  \\
    -C^-
    \end{array}}\right)$,
\end{definition}

Since $F$ is bijection, let definition the $F^-1$ as the reverse
mapping from $G$ to $D$.

 The concept of z-mapping graph of a Digraph with degree bound
two was introduced in \cite{zhu2007} which is named {\it Projector
graph}.

It is easy to deduce that the z-mapping graph of $D$ has following
properties.

\begin{lemma}
\label{bipariteofgamma} A Z-mapping graph $G$ of digraph $D$ with
$n$ vertices and $m$ arcs is an balanced bipartite graph $G(X,Y;E)$
with $|X|=n$,$|Y|=n$ and $|E|=m$
\end{lemma}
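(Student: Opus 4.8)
The plan is to read the claim directly off the block structure of the matrix $F$ defining $Z(D)$, using the observation recorded just after the undirected incidence matrix definition: a $(0,1)$-matrix has the form of the incidence matrix of an undirected graph exactly when every column carries two $1$ entries, one per endpoint of the corresponding edge.

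First I would record the shape of the directed incidence matrix $C$ from Definition~\ref{incidencematrixdef}: it is an $n \times m$ matrix whose $j$-th column corresponds to the arc $a_j$, and by the remark following that definition each column contains exactly one $+1$ entry (at the tail vertex) and exactly one $-1$ entry (at the head vertex). Splitting $C$ as in (\ref{C_Plusedef}) and (\ref{C_Minusdef}), the matrix $C^+$ retains only the single $+1$ of each column while $-C^-$ retains only the (now sign-flipped) $1$ coming from the former $-1$. Hence both $C^+$ and $-C^-$ are $n \times m$ $(0,1)$-matrices with exactly one $1$ in every column.

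Next I would stack these into $F = \left({\begin{array}{c} C^+ \\ -C^- \end{array}}\right)$, a $2n \times m$ $(0,1)$-matrix. Each of its $m$ columns inherits exactly one $1$ from the $C^+$ block and exactly one $1$ from the $-C^-$ block, so every column of $F$ has precisely two $1$ entries; thus $F$ has the form of the incidence matrix of an undirected graph $G$ on $2n$ vertices with $m$ edges. I would then label the first $n$ rows $X$ with vertices $x_1,\dots,x_n$ and the last $n$ rows $Y$ with vertices $y_1,\dots,y_n$. For an arc $a_j = <v_i,v_k>$, the corresponding column of $F$ places its two $1$s at row $i$ of the $C^+$ block and row $k$ of the $-C^-$ block, i.e. at $x_i \in X$ and $y_k \in Y$. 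Therefore every edge of $G$ joins a vertex of $X$ to a vertex of $Y$, so $G$ is bipartite with bipartition $(X,Y)$; since $|X| = |Y| = n$ it is balanced, and counting columns gives $|E| = m$, which is the assertion.

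The argument is essentially bookkeeping, so the only point requiring care is confirming that the two $1$s of each column sit in genuinely distinct vertices of $G$. This is where the standing assumption that $D$ is a simple digraph with no self-loops enters, giving $v_i \neq v_k$ for every arc; but even without it the two $1$s lie in different blocks ($X$ versus $Y$), so no column can degenerate to a loop in $G$. I expect no substantive obstacle beyond keeping the tail/head accounting of $C^+$ and $-C^-$ straight.
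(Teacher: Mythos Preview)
Your argument is correct: reading the block structure of $F$ column by column and noting that each column contributes exactly one $1$ to the $C^+$ block and one to the $-C^-$ block is precisely how one checks that $F$ is the incidence matrix of a balanced bipartite graph on $2n$ vertices with $m$ edges. The paper itself gives no proof of this lemma; it simply precedes the statement with the remark that it is ``easy to deduce,'' so your write-up supplies exactly the routine verification the paper omits.
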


Since a simple cycle of $D$ is divided into  disjoint edges of $G$,
thus a lemma follows the lemma~\ref{simplecycle}.
\begin{lemma}
\label{Z-mapping1} A Z-mapping graph $G$ of a simple directed cycle
$L$ with $n$ arcs is a perfect matching with disjoint $m$ edges.
\end{lemma}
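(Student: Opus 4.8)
The plan is to reduce the statement to a degree count on the bipartite graph $G=Z(L)$, exploiting the fact that every vertex of a simple directed cycle is entered exactly once and left exactly once. First I would record that a simple cycle on $n$ arcs visits $n$ distinct vertices, so as a digraph it has $n$ vertices and $n$ arcs; Lemma~\ref{bipariteofgamma} then immediately supplies $|X|=|Y|=n$ and $|E|=m=n$. It therefore remains only to show that this edge set is a \emph{matching}, i.e.\ that no two edges share an endpoint, whence it is automatically perfect because it saturates all $2n$ vertices.

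Next I would trace how a single arc of $L$ becomes an edge of $G$. Writing $L$ as $v_1 \to v_2 \to \cdots \to v_n \to v_1$, each arc $a_j=<v_s,v_t>$ contributes a column of the incidence matrix $C$ carrying a single $+1$ in row $s$ and a single $-1$ in row $t$. Under the block form $F=\left(\begin{array}{c} C^+ \\ -C^- \end{array}\right)$ of Definition~\ref{incidencematrixdef}, this column has exactly one $1$ in the top block (row $s$, the vertex $x_s \in X$) and exactly one $1$ in the bottom block (row $t$, the vertex $y_t \in Y$). Hence the bijection $F$ sends the arc $<v_s,v_t>$ to the unique edge $x_s y_t$ of $G$.

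The crux is then the degree transfer. In a simple directed cycle one has $d^{+}(v_i)=d^{-}(v_i)=1$ for every $i$, so each $v_i$ is the tail of exactly one arc and the head of exactly one arc. By the previous step the tails govern the incidences in $X$ and the heads govern the incidences in $Y$; consequently every $x_i \in X$ lies on exactly one edge and every $y_j \in Y$ lies on exactly one edge. A bipartite graph in which every vertex has degree exactly $1$ is precisely a perfect matching, so the $m=n$ edges of $G$ are pairwise disjoint and cover $X \cup Y$. This is the content of the lemma, and it is consistent with Lemma~\ref{simplecycle}, since an even directed cycle further splits these edges into two such matchings.

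I expect the only genuine subtlety to be the bookkeeping: verifying that the $C^{+}/{-C^-}$ decomposition really does separate the tail-incidence from the head-incidence of each arc, so that the out-degree condition controls $X$ and the in-degree condition controls $Y$ independently. Once that separation is made explicit, the perfect-matching conclusion is immediate from the degree-one count, and no further estimate is required.
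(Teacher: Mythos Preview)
Your argument is correct and is in fact considerably more detailed than what the paper offers. In the paper this lemma carries no formal proof at all: it is preceded only by the single sentence ``Since a simple cycle of $D$ is divided into disjoint edges of $G$, thus a lemma follows the lemma~\ref{simplecycle},'' and then the statement is asserted. So the paper's route is a bare appeal to Lemma~\ref{simplecycle} (the even-cycle/two-matchings fact), whereas you give the direct degree-transfer argument: $d^{+}(v_i)=d^{-}(v_i)=1$ on the cycle forces $\deg_G(x_i)=\deg_G(y_i)=1$ after the $C^{+}/(-C^{-})$ split, hence $G$ is $1$-regular bipartite on $2n$ vertices, i.e.\ a perfect matching. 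Your approach is self-contained and makes explicit exactly the separation of tail- and head-incidences that the paper leaves implicit; the paper's appeal to Lemma~\ref{simplecycle} is terser but also vaguer, since that lemma speaks of decomposing an even cycle into two matchings rather than of the $Z$-image being a single matching.

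One small quibble: your closing remark that ``an even directed cycle further splits these edges into two such matchings'' is not quite right as stated --- the $Z$-image of a directed $n$-cycle is already a single perfect matching of $n$ edges, not something that decomposes into two perfect matchings. This aside does not affect your main argument, but I would drop it or rephrase it, since the connection to Lemma~\ref{simplecycle} you are gesturing at is not the one the sentence describes.
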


\subsection{Hamiltonian digraph Vs  Z-mapping graph}
In \cite{zhu2007}, it is presents a  bijection from a digraph $D$
with degree bound two to a balanced bipartite graph $G$, and a
theorem is follows.
\begin{theorem}
\label{zhu_perfectofgamma} \cite{zhu2007}
 Let $G$ be the Z-mapping
graph of a digraph $D$ with degree bound two, a Hamiltonian cycle of
$D$ is equivalent to a perfect match $M$ in $G$ and
$r(f(M)^\prime)=n-1$.
\end{theorem}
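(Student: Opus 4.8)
The plan is to establish the claimed equivalence in Theorem~\ref{zhu_perfectofgamma} by analyzing the correspondence between arcs of $D$ and edges of its Z-mapping graph $G$, using the matrix construction $F=\bigl(\begin{smallmatrix} C^+\\ -C^-\end{smallmatrix}\bigr)$ from Definition~\ref{incidencematrixdef}. The core observation is that the bipartition $(X,Y)$ of $G$ is built so that $X$ indexes the ``out'' copies of vertices (rows of $C^+$) and $Y$ indexes the ``in'' copies (rows of $-C^-$), and each column (arc) $a_j=<v_i,v_k>$ contributes exactly one edge joining $x_i\in X$ to $y_k\in Y$. Thus a set $M$ of edges in $G$ is in bijection with a set of arcs in $D$ via $F^{-1}$, and $M$ being a \emph{perfect} matching means precisely that every $x_i$ and every $y_k$ is covered exactly once, i.e.\ the selected arcs give out-degree $1$ and in-degree $1$ at every vertex of $D$.

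First I would prove the forward direction: given a Hamiltonian cycle $L$ of $D$, I invoke Lemma~\ref{Z-mapping1} to conclude that $f(L)$ is a perfect matching $M$ of $G$ with $n$ disjoint edges. Then I must verify the rank condition $r(f(M)')=n-1$. Here I would use Theorem~\ref{rank_theorem}: the arcs picked out by $M$, viewed back in $D$ as the incidence matrix $f(M)'$, form a subdigraph whose underlying graph is a single cycle on all $n$ vertices, hence is connected with $k=1$ component, giving $r=n-1$. This is the clean half of the argument.

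The converse is the substantive direction and the main obstacle. Given a perfect matching $M$ in $G$ with $r(f(M)')=n-1$, the perfect-matching property alone guarantees (via $F^{-1}$) a set of $n$ arcs with every in-degree and out-degree equal to $1$; such a subdigraph is necessarily a disjoint union of directed cycles covering all vertices (a permutation / $1$-factor of $D$). The work is to show it is a \emph{single} cycle rather than several. This is exactly where the rank hypothesis does the job: a disjoint union of $k$ spanning cycles has $k$ connected components, so by Theorem~\ref{rank_theorem} its incidence matrix has rank $n-k$; the assumption $r(f(M)')=n-1$ forces $k=1$, leaving one spanning cycle, which is by definition a Hamiltonian cycle of $D$. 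I would need to check carefully that the ``$1$-in/$1$-out'' subdigraph cannot contain degenerate pieces (e.g.\ it has no self-loops, guaranteed by $|a_i\bowtie a_i|=0$, and no length-one components), so that the decomposition into simple directed cycles is valid and each component is a genuine cycle in the sense of conditions c1--c3.

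The delicate point to get right is the precise meaning of $f(M)'$ and the bookkeeping of how perfect matchings of the bipartite $G$ correspond to $1$-regular (both in and out) spanning subdigraphs of $D$; once that dictionary is pinned down, the equivalence reduces to the standard fact that a permutation digraph is a single cycle iff it is connected iff its incidence-matrix rank is $n-1$. I expect the rank-to-connectivity translation through Theorem~\ref{rank_theorem} to be routine, so the real care is in the combinatorial identification of $M$ with a spanning union of vertex-disjoint cycles under the degree-bound-two hypothesis on $D$.
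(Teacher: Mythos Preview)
The paper does not actually prove this particular theorem: it is quoted verbatim from \cite{zhu2007} with no accompanying argument. What the paper \emph{does} prove is the immediate generalization, Lemma~\ref{perfectofL}, which drops the degree-bound-two hypothesis and is otherwise the same statement. Your proposal should be compared against that proof.

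Against Lemma~\ref{perfectofL}, your argument is correct and uses the same ingredients (Lemma~\ref{Z-mapping1} for the forward direction, Theorem~\ref{rank_theorem} for the rank--connectivity translation), but the converse is organized differently. The paper first uses $r(F^{-1}(M))=n-1$ to conclude $D'$ is strongly connected, hence every vertex has $d^{+},d^{-}\geq 1$, and then uses $|M|=n$ to force $d^{+}=d^{-}=1$. You reverse the order: from the perfect matching you immediately read off $d^{+}=d^{-}=1$ at every vertex, obtain a spanning union of vertex-disjoint directed cycles, and then invoke the rank hypothesis to force a single component. Your route is arguably cleaner, since it makes explicit the intermediate ``permutation digraph'' structure and isolates precisely what the rank condition contributes; the paper's version leaves the cycle-decomposition implicit. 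Either way the content is the same, and your proof does not depend on the degree-bound-two restriction any more than the paper's Lemma~\ref{perfectofL} does.
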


Let extends this bijection to general $digraph$.
\begin{lemma}
\label{perfectofL} Let $G$ be the Z-mapping graph of a digraph $D$,
a Hamiltonian cycle of $D$ is equivalent to a perfect match $M$ in
$G$ and $r(F^{-1}(M))=n-1$.
\end{lemma}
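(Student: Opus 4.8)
The plan is to prove the two directions of the stated equivalence separately, with the rank condition $r(F^{-1}(M)) = n-1$ serving precisely to force a single cycle rather than an arbitrary cycle cover. This mirrors the degree-bound-two argument of Theorem~\ref{zhu_perfectofgamma}, with $F^{-1}$ in place of the earlier bijection and the degree restriction dropped.

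First I would set up the structural dictionary between perfect matchings of $G$ and spanning subdigraphs of $D$. By Definition~\ref{incidencematrixdef}, each column of $F$ carries exactly two $1$ entries: one in the $X$-block coming from $C^+$ (the head of the arc) and one in the $Y$-block coming from $-C^-$ (the tail of the arc). Thus the edge $e_j$ of $G$ records the head and tail of the arc $a_j = F^{-1}(e_j)$. Consequently a perfect matching $M$ covers every $x_k \in X$ exactly once, which forces $d^{-}(v_k) = 1$ in $F^{-1}(M)$, and covers every $y_i \in Y$ exactly once, which forces $d^{+}(v_i) = 1$. Hence $F^{-1}(M)$ is a spanning subdigraph of $D$ in which every vertex has in-degree and out-degree equal to one; equivalently, since there are no self-loops, it decomposes into vertex-disjoint directed cycles covering all $n$ vertices.

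Next I would invoke the rank count. Because $F^{-1}(M)$ is a cycle cover it has no isolated vertices, and each of its connected components is a single directed cycle. By Theorem~\ref{rank_theorem}, $r(F^{-1}(M)) = n - k$, where $k$ is the number of components, i.e. the number of cycles. Therefore $r(F^{-1}(M)) = n-1$ holds if and only if $k=1$, i.e. if and only if the cycle cover is a single directed cycle spanning all of $V$ — a Hamiltonian cycle.

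These two observations already close both directions. For the forward implication, a Hamiltonian cycle $L$ of $D$ is a simple directed cycle of length $n$; by Lemma~\ref{Z-mapping1} its Z-mapping image is a perfect matching $M$ with $F^{-1}(M) = L$, and since $L$ is connected the rank count gives $r(F^{-1}(M)) = n-1$. For the converse, a perfect matching $M$ with $r(F^{-1}(M)) = n-1$ yields, by the dictionary above, a cycle cover with $k=1$, which is exactly a Hamiltonian cycle. The main obstacle is the first step: one must verify carefully that the bipartite incidence structure of $F$ translates perfect matchings into in- and out-degree-one subdigraphs, and that Theorem~\ref{rank_theorem} is being applied to $F^{-1}(M)$ regarded as a digraph on the full vertex set $V$. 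Once this correspondence is pinned down, the rank argument is immediate.
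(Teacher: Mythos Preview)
Your proposal is correct and follows essentially the same approach as the paper: both directions rest on Lemma~\ref{Z-mapping1} for the forward implication and on combining the perfect-matching degree count with Theorem~\ref{rank_theorem} for the converse. Your organization is slightly cleaner in that you first establish the cycle-cover structure directly from the bipartite incidence of $F$ and then apply the rank identity $r=n-k$, whereas the paper first asserts strong connectivity from $r(L)=n-1$ and only afterwards pins down $d^{+}=d^{-}=1$ via the counting argument $|L|=|M|=n$.
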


\begin{proof}
$\Rightarrow$ Let the digraph $D(V;A)$ with a Hamiltonian cycle $L$,
the incidence matrix of $(V;L)$ is represented by matrix $C^\prime$,
\begin{equation}
\label{perfectofL1}
 r(C^\prime)=r(L)=n-1
\end{equation}

According to lemma~\ref{Z-mapping1}, the Z-mapping graph $G$ of $D$
has a perfect matching $M=F(L)$, thus
\begin{equation}
\label{perfectofL2}
 L=F^{-1}(M)
\end{equation}
Let equation~\ref{perfectofL2} substitute to
equation~\ref{perfectofL1}. Then  $r(F^{-1}(M))=n-1$.

$\Leftarrow$ Let $G(X,Y;E)$ be the Z-mapping graph of the digraph
$D(V,A)$, and $M$ be a perfect matching in $G$. Let $D^\prime(V,L)$
be a sub graph of $D(V,A)$ satisfies that  $L=F^{-1}(M)$ and
$r(L)=n-1$.

Let $D^\prime(V,L)$ be a strong connected digraph. It deduces that
$\forall v_i \in D^\prime $,$d^+(v_i) \geq 1 \wedge d^-(v_i) \geq
1$. Suppose $\exists v_i \in D^\prime$, $d^+(v_i) > 1$ ($d^-(v_i) >
1$ respectively), Since $|M|=n$, it deduces that $|L|=n$.  So
$\forall v_i \in D^\prime$, $d^+(v_i)=d^-(v_i)=1$,  $D^\prime$ is a
Hamiltonian cycle.
\end{proof}

Unfortunately, according to \cite{zhu2007}, even limited the digraph
with degree bound two, finding Hamiltonian cycle in $D$  need visit
all isomorphism perfect matching in $G$ and obtaining a solution. In
another words, we hope that determining the Hamiltonian graph
without to obtaining a solution.

\subsection{Complexity of determining Hamiltonian cycle}

Let $M$ denotes  set of  perfect matching in $G$. Let $C(G)$ define
as follows.
$$C(G) = G \setminus \{e \in E \wedge e \not \in M \}$$.
\begin{remark}
 The connected  $C(G)$ is named matching cover graph in
 \cite{Mkrtchyan2006}, but this paper extends the concept on disconnected graph.
\end{remark}

Obtaining a $C(G)$ from a bipartite graph $G$ is not difficult, the
complexity is follows.
\begin{lemma}
\label{CGcomplexity} The complexity of obtaining $C(G)$ from a
bipartite graph $G$ is $O(n^3)$.
\end{lemma}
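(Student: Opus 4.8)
The plan is to reduce the construction of $C(G)$ to a single perfect-matching computation followed by a strong-connectivity analysis, so that the running time is dominated by the Hopcroft--Karp bound $O((m+n)\sqrt{n})$ already cited in \cite{Hopcroft1973}. By definition $C(G)$ retains exactly those edges of $G$ that lie in at least one perfect matching; such edges are classically called \emph{allowed}. The first step is to compute one perfect matching $M_0$ of $G$ by Hopcroft--Karp. If no perfect matching exists, then no edge is allowed, $C(G)$ is empty, and we are finished; otherwise we keep $M_0$ as a reference matching, at cost $O((m+n)\sqrt{n})$.

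The second step characterises the allowed edges relative to $M_0$. An edge $e$ with endpoints $x \in X$ and $y \in Y$ is allowed if and only if either $e \in M_0$ or $e$ lies on an $M_0$-alternating cycle; this rests on the standard fact, used in the same spirit as Lemma~\ref{simplecycle}, that the symmetric difference of two perfect matchings of a balanced bipartite graph is a disjoint union of even alternating cycles. To detect all such cycles at once I would orient $G$ into a digraph $\vec{G}$ by directing every matching edge of $M_0$ from $Y$ to $X$ and every non-matching edge from $X$ to $Y$. Then an $M_0$-alternating cycle in $G$ corresponds precisely to a directed cycle in $\vec{G}$, so an edge lies on an alternating cycle if and only if its two endpoints lie in the same strongly connected component of $\vec{G}$.

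The third step is the computation proper: run a strongly connected components algorithm on $\vec{G}$, label each vertex with its component, and scan the $m$ edges once. Every edge of $M_0$ is retained automatically, since matching edges are allowed by definition, and a non-matching edge is retained precisely when its two endpoints share a strongly connected component. The component computation and the final scan each cost $O(n+m)$. Summing the three steps yields $O((m+n)\sqrt{n}) + O(n+m)$; since $G$ is a simple balanced bipartite graph with $|X|=|Y|=n$ we have $m \leq n^2$ by Lemma~\ref{bipariteofgamma}, so the total is $O(n^{2}\sqrt{n}) = O(n^{2.5}) \subseteq O(n^3)$.

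The main obstacle is not the running-time bookkeeping but the correctness of the orientation criterion: one must verify that every directed cycle of $\vec{G}$ pulls back to a genuine $M_0$-alternating cycle of $G$ and conversely, and that for both matching and non-matching edges, membership in a common strongly connected component is equivalent to lying on such a cycle. A subtle point is that a matching edge need not lie on any alternating cycle (for instance when the matching is unique), so matching edges must be retained explicitly rather than inferred from the component test. Once this equivalence is established, the $O(n^3)$ bound follows at once from the cited matching algorithm together with the linear-time component computation.
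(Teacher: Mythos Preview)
Your argument is correct and follows the standard route for computing the set of allowed edges: find one perfect matching $M_0$, orient matching edges $Y\to X$ and non-matching edges $X\to Y$, and read off allowed non-matching edges from the strongly connected components of the resulting digraph, while keeping all of $M_0$ automatically. The correctness hinges on the alternating-cycle characterisation you state, and your handling of the subtle case of matching edges not lying on any alternating cycle is accurate. The paper's own proof takes a different and much terser line: it asserts that a maximal matching can be found in $O(n^2)$, observes that in a balanced bipartite graph on $2n$ vertices the maximum degree is $n$, and claims that repeating the matching computation $n$ times suffices to uncover every edge lying in some perfect matching, for a total of $n\cdot O(n^2)=O(n^3)$. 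Your approach is more explicit about \emph{why} every allowed edge is found, and it delivers the sharper bound $O(n^{2.5})$ as a by-product; the paper's repetition argument, by contrast, does not spell out why $n$ iterations are enough to cover all allowed edges, so your version is the more complete of the two even though both land inside $O(n^3)$.
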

\begin{proof}
Since obtaining a maximal matching from bipartite $G$ is $O(n^2)$,
every balance bipartite graph with $2n$ vertices has maximal degree
$n$, repeat $n$ times, can finding all of matching edges in $G$.
Thus the complexity is $O(n^3)$.
\end{proof}

\begin{theorem}
\label{zhu_perfectofgamma} Let $G$ be the Z-mapping graph of a
digraph $D$, if $r(F^{-1}(C(G))=n-1$ then $D$ is Hamiltonian.
\end{theorem}

Then the bijection between digraph $D$ and a Z-mapping graph $G$ can
be extend between the $C(G)$ and $D$.

\begin{theorem}
\label{zhu_perfectofgamma} Let $G$ be the Z-mapping graph of a
digraph $D$, if $r(F^{-1}(C(G))=n-1$ then $D$ is Hamiltonian.
\end{theorem}

\begin{proof}
Since $C(G)=M_1 \cup \ldots \cup M_i \cup \ldots \cup M_q$($q \geq
1$). There are two cases of $q$.

\begin{enumerate}
\item[$q=1$.] Since $G$ has only one perfect matching $M$ and
$M=C(G)$, then $r(F^{-1}(C(G))=r(F^{-1}(M)=n-1$. According to
theorem~\ref{zhu_perfectofgamma}, $D$ is Hamiltonian
\item[$q \geq 2$.]
Suppose $\exists M_i$, $C_i=F^{-1}M_i$ and  $r(C_i)<n-1$, where
$C_i$ is a incidence matrix $n \times n$.

Since $r(M_1 \cup \ldots \cup M_i \cup \ldots \cup M_q)=n-1$, there
exists a set of edges $E=C(G) \cap M_i$, $M_i$ and $E$ are linear
independence. Let $C^\prime=F^{-1}(M_i \cup E)$,  thus $r(C^\prime)
= n-1$, since $C^\prime$ is $n \times m$ matrix ($n \leq m$), and
there are $n \times n$ sub matrix $C^{\prime\prime}$ of  $C^\prime$
satisfies that $r(C^{\prime\prime})=r(C^\prime)=n-1$. Since
$r(F(C^{\prime\prime})) \geq r(F(C^{\prime\prime}))+1=n $, thus the
$n$ edges in $F(C^{\prime\prime})$ is linear independence. So
$F(C^{\prime\prime})$ is a perfect matching in $G$. Thus $D$ is
Hamiltonian.
\end{enumerate}

\end{proof}

\begin{theorem}
\label{complexityHCP}. Determining the HCP problem in graph $D$ is
$O(n^3)$.
\end{theorem}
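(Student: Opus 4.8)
The plan is to assemble the decision procedure directly from the three ingredients already in hand: the Z-mapping construction $F$, the matching-cover graph $C(G)$, and the rank criterion of Theorem~\ref{zhu_perfectofgamma}. Given a digraph $D$ on $n$ vertices and $m \le n^2$ arcs, I would describe an algorithm with three phases, bound the cost of each phase separately, and then take the maximum. The whole point is that no phase exceeds cubic time, so the stated bound follows by addition.

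First I would construct the Z-mapping graph $G = Z(D)$: form the incidence matrix $C$ of $D$, split it into $C^+$ and $-C^-$ as in equations~\ref{C_Plusedef} and~\ref{C_Minusdef}, and stack them to obtain $F$, giving the balanced bipartite graph $G(X,Y;E)$ guaranteed by Lemma~\ref{bipariteofgamma}. Since $F$ has $2n$ rows and $m$ columns with a bounded number of nonzeros per column, this costs at most $O(n^2)$. Second, I would compute $C(G)$, the union of all perfect matchings of $G$; by Lemma~\ref{CGcomplexity} this is $O(n^3)$. Third, I would form the incidence matrix $F^{-1}(C(G))$ of the associated subgraph of $D$ and compute its rank by Gaussian elimination, again in $O(n^3)$.

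The decision step is then the single test whether $r(F^{-1}(C(G))) = n-1$. By Theorem~\ref{rank_theorem} this equality holds exactly when the subgraph $F^{-1}(C(G))$ is connected and spanning, and Theorem~\ref{zhu_perfectofgamma} supplies the forward implication that the rank condition forces $D$ to be Hamiltonian. For completeness I would invoke Lemma~\ref{perfectofL}: any Hamiltonian cycle $L$ is itself a perfect matching of $G$, hence $L \subseteq C(G)$, and $r(F^{-1}(L)) = n-1$; since adjoining further edges can only raise the rank while $n-1$ is the maximum attainable on $n$ vertices, we get $r(F^{-1}(C(G))) = n-1$ as well. Thus the test is both sound and complete, and the total running time is the maximum of the three phases, namely $O(n^3)$.

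The hard part will be the converse direction above, where I would concentrate the verification. One must argue that enlarging a Hamiltonian perfect matching to the full union $C(G)$ neither pushes the rank beyond $n-1$ nor ever produces rank $n-1$ \emph{spuriously}, i.e. that $F^{-1}(C(G))$ forms a single spanning component precisely when $D$ admits a spanning strongly connected cycle rather than merely a spanning connected subgraph. This is the delicate claim inherited from Theorem~\ref{zhu_perfectofgamma}: that a rank-$(n-1)$ witness in the union of \emph{all} perfect matchings certifies that one individual matching already realizes a Hamiltonian cycle, and not just a rank-$(n-1)$ collection of arcs that fails condition~$c2$. Establishing that equivalence rigorously is the crux on which the whole cubic bound rests.
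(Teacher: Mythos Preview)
Your proposal follows essentially the same algorithmic skeleton as the paper: build the Z-mapping $G=Z(D)$, extract $C(G)$, compute the rank of $F^{-1}(C(G))$, and compare to $n-1$; the paper's own proof is just a one-line tally of these same three costs (plus a sentence reducing undirected graphs to digraphs by doubling edges, which you omit). Where you go beyond the paper is in arguing \emph{completeness} of the test---that a Hamiltonian cycle forces $r(F^{-1}(C(G)))=n-1$---and in explicitly flagging the soundness direction inherited from Theorem~\ref{zhu_perfectofgamma} as the delicate point; the paper's proof of the present theorem contains no correctness discussion at all and simply relies on the earlier theorem, so your added care is an improvement rather than a deviation.
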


\begin{proof}
Since every edge in $G$ can be substitute by two symmetric arcs in
$D$, then the digraph is only need considering.
Since obtaining a
perfect matching in bipartite $G$ is $O(n^3)$, obtaining a $C(G)$ is
$O(n^2)$, and the complexity of rank on matrix is $O(n^3)$. Then
Determining the HCP problme is only $O(n^3)$.
\end{proof}

\section{ Conclude remake}

According to the theorem~\ref{complexityHCP}, the complexity of
determining a Hamiltonian cycle existence or not is only $O(n^3)$.
Thus it proved that $P=NP$ again.

Compare with the results in \cite{zhu2007}, the complexity of
obtaining a Hamiltonian cycle is $O(n^4)$. Since the determining
Hamiltonian cycle belongs to decision problem
(problme~\ref{problem1}), which means it only need answer "yes" or
"no" for a given problem, but obtaining a Hamiltonian from graph
belongs to a optimization problem (problem~\ref{problem2}). Since
traveling salesperson problem (TSP) have both decision and
optimization problem, a conjecture follows.

\begin{conjecturee}
\label{Zhu_conjecture} The complexity of optimization TSP is
$O(n^4)$, but the decision TSP is $O(n^3)$.
\end{conjecturee}


\thebibliography{6}
\itemsep=0pt
\bibitem{Johnson1985}
Papadimitriou, C. H. {\it Computational complexity }, in Lawler, E.
L., J. K. Lenstra, A. H. G. Rinnooy Kan, and D. B. Shmoys, eds., {
\it The Traveling Salesman Problem: A Guided Tour of Combinatorial
Optimization }. Wiley, Chichester, UK. (1985), 37--85

\bibitem{William1997}
William Lenhart, Christopher Umans, "Hamiltonian Cycles in Solid
Grid Graphs," focs, 38th Annual Symposium on Foundations of Computer
Science (FOCS '97),  1997, 496

\bibitem{Bang-Jensen1998}
J{\o}rgen Bang-Jensen, Gregory Gutin, Anders Yeo. {\it A polynomial
algorithm for the Hamiltonian cycle problem in semicomplete
multipartite digraphs }. { Journal of Graph Theory} Vol.29, (1998),
111-132.

\bibitem{zhu2007}
Guohun Zhu,{ \it The Complexity of Hamiltonian Cycle Problem in
Digraps with Degree Bound Two is Polynomial Time },
{arXiv:0704.0309v2}, 2007.

\bibitem{PLESNIK1978}
J.Plesn{\'\i}k,{\it The NP-Completeness of the Hamiltonian Cycle
Problem in Planar digraphs with degree bound two}, {Journal
Information Processing Letters}, Vol.8(1978), 199--201

\bibitem{Hopcroft1973}
J.E. Hopcroft and R.M. Karp , An $n^{5/2} $ Algorithm for Maximum
Matchings in Bipartite Graphs. SIAM J. Comput. Vol.2, (1973),
225--231

\bibitem{Hall1935}
P. Hall, On representative of subsets, J. London Math. Soc. 10,
(1935), 26--30

\bibitem{Pearl1973}
Pearl, M, { \it Matrix Theory and Finite Mathematics},{McGraw-Hill},
{New York},(1973), 332--404.

\bibitem{Mkrtchyan2006}
Mkrtchyan, V.V. {\it A note on minimal matching covered graphs }, {
Discrete Mathematics } Vol.36, (2006),452--455

\end{document}